%% file: main.tex
\documentclass{article}

\usepackage[utf8]{inputenc} 
\usepackage[T1]{fontenc}
\usepackage{lmodern}
\usepackage{hyperref}       
\usepackage{url}            
\usepackage{booktabs}
\usepackage{siunitx}
\usepackage{amsfonts}       
\usepackage[numbers]{natbib}
\usepackage{nicefrac}       
\usepackage{microtype}      
\usepackage{algorithm,algpseudocode}
\usepackage{amsfonts}
\usepackage{amsmath}
\usepackage{amsthm}
\usepackage{amssymb}
\usepackage[title]{appendix}
\usepackage{bm}
\usepackage{courier}
\usepackage[usenames,dvipsnames]{color}
\usepackage{enumitem}
\usepackage{graphicx}
\usepackage[margin=0.9in]{geometry}
\usepackage{url}
\usepackage{subfigure}
\usepackage{multirow}
\usepackage{authblk}
\usepackage{subcaption}
\usepackage[dvipsnames]{xcolor}
\hypersetup{
	colorlinks,
	linkcolor={red!80!black},
	citecolor={blue!50!black},
	urlcolor={blue!80!black}
}

\newcommand{\ts}[1]{\textcolor{orange}{#1}}

\input{./Definitions}

\title{DuaLip-GPU Technical Report}

\author{Gregory Dexter, Aida Rahmattalabi$^{*}$, Sanjana Garg, Qinquan Song, Ruby Tu, Yuan Gao, Yi~Zhang, Zhipeng~Wang$^{*}$, Rahul Mazumder}

\affil{LinkedIn}

\date{} 

\usepackage{parskip}

\hypersetup{
	colorlinks,
	linkcolor={red!80!black},
	citecolor={blue!50!black},
	urlcolor={blue!80!black}
}

\begin{document}

\maketitle
\renewcommand{\thefootnote}{\fnsymbol{footnote}}
\setcounter{footnote}{0}

\footnotetext[1]{Corresponding authors: arahmattalabi@linkedin.com, zhipwang@linkedin.com}

\input{main_text}
\bibliography{bibliography}

\appendix

\input{appendix}

\end{document}

%% file: main_text.tex
\section{Introduction}
Linear programs (LPs) underpin many large-scale decision systems: ranking, allocation, and a broad class of matching problems solved on recurring cadences in production. Prior work in this lineage—\textsc{ECLIPSE}’s ridge-regularized dual ascent and LinkedIn’s open-source \textit{DuaLip} in a Scala/Spark stack—showed that first-order methods can scale to these settings, but the original system was tightly coupled to two fixed schemas and a CPU-centric runtime that made new formulations difficult to express and prevented effective use of modern accelerators.

We re-architect this solver for reuse and for modern hardware. The result is a Python library and system that co-designs (i) a compositional interface for plugging in problem logic, (ii) an algorithmic package suitable for extreme-scale LPs with decomposable ``simple constraints'' (defined subsequently), and (iii) GPU execution techniques needed to realize the available parallelism in practice. Rather than a single ``call a solver'' entry point, the library exposes operator-level roles that allow new LP formulations and projection maps representing simple constraint families to be added locally while the solve loop and diagnostics remain unchanged\footnote[2]{Code for the presented LP solver is available at \href{https://github.com/linkedin/DuaLip}{https://github.com/linkedin/DuaLip}.}.

The architecture is intentionally \emph{general}: it is designed for a broad class of LPs beyond any single schema. In this report, however, we focus on LPs with \emph{matching constraints} as our primary application domain, since these workloads dominate our production usage and expose the most stringent algorithmic and systems bottlenecks. We therefore use matching LPs as a running case study for the library and its implementation.

On these matching workloads, our design achieves order-of-magnitude wall-clock gains of \emph{at least $10\times$ faster} than the prior distributed CPU \textit{DuaLip} to reach a fixed sub-optimality gap while providing predictable scaling from one to many GPUs.

\textbf{Contributions.} This work brings together three strands informed by production use:
\begin{itemize}
  \item \emph{Operator-centric programming model and library interface.} We replace DuaLip’s schema-bound, template-based interface (multi-objective optimization vs.\ single-block matching) with an operator-centric model tailored to ridge-regularized dual ascent. Problem logic is expressed only through three primitives: \texttt{objectives/} (encapsulating the data and dual gradient computation), \texttt{projections/} (blockwise projections onto a decomposable constraint polytope), and an \texttt{optimizer/} (which performs dual ascent using the dual gradient), while the solve loop, diagnostics, and distributed primitives are shared. This separation enables support for new formulations—such as incorporating multiple constraint families—with minimal additional code and purely local composition.

  \item \emph{Algorithmic improvements for ridge-regularized dual ascent.} Building on the ECLIPSE/DuaLip framework, we strengthen the underlying method with Jacobi-style row normalization, primal scaling and a simple continuation scheme for the regularization parameter $\gamma$. We analyze the impact of these changes on conditioning and first-order optimizer convergence.

  \item \emph{GPU execution and sparse layout for extreme-scale LPs.} We show how to realize the available parallelism on GPUs by choosing sparse layouts aligned with the constraint structure, batching projection operators to amortize kernel launches, and using a distributed pattern that communicates only the dual variables. For the structured matching workloads we study, these system-level choices alone provide at least $10\times$ wall-clock speedup over the prior distributed CPU DuaLip under matched stopping criterion.
\end{itemize}

Together, these changes turn ridge-regularized dual ascent from a specialized, schema-bound deployment into a flexible, high-performance solver architecture suitable for modern GPU-accelerated pipelines. In the remainder of the report, we formalize the matching LPs that motivate our design, describe the operator-centric programming model, develop the algorithmic enhancements, and detail the GPU execution mechanisms, with experiments on extreme-scale matching workloads.

\section{Related Work}

Our work builds on multiple threads of prior research.

\textbf{First-order methods for large-scale linear programs.}  
Projected first-order methods have recently emerged as a practical approach for massive LPs, most notably PDLP~\cite{applegate2021practical} and its GPU implementation cuPDLP~\cite{lu2025cupdlp}. These solvers provide strong theoretical convergence guarantees and demonstrate competitive performance on high-precision benchmarks, including routing problems and general LP instances. However, their implementations rely on carefully engineered custom kernels optimized for generic sparse matrix–vector operations, thereby treating the constraint matrix as an unstructured operator. As a result, they do not explicitly leverage dual decomposition or diagonal-block structure and have not been demonstrated to scale to the extreme problem sizes characteristic of our target workloads.

In contrast, our setting targets extreme-scale matching and allocation workloads with highly structured constraint families and moderate precision requirements. Rather than pursuing uniform high-accuracy solutions, we emphasize rapid convergence to economically meaningful dual solutions under decomposable constraint structure. Our approach explicitly exploits diagonal-block and matching-specific structure in the dual updates, enabling distributed decomposition across columns and efficient aggregation via collective communication. This structural specialization differentiates our method from general-purpose first-order LP solvers.

\textbf{Large-scale matching and allocation.}  
A substantial literature addresses weighted matching and assignment at scale, including specialized bipartite matching solvers and network-flow-based formulations. While these approaches achieve strong performance for canonical matching constraints, they are typically restricted to fixed constraint classes (e.g., degree constraints, capacity constraints) and do not naturally accommodate additional arbitrary linear constraints that arise in production allocation systems. 

In contrast, our formulation preserves the flexibility of general linear programming while retaining the computational advantages of matching structure in the dual. This allows us to encode heterogeneous constraint families without modifying the core solver.

\textbf{Ridge-regularized dual ascent.}  
The ridge-regularized dual formulation introduced by \citet{basu2020eclipse} demonstrated that adding $\ell_2$ regularization to the primal induces smoothness in the dual, enabling scalable first-order optimization for extreme-scale allocation. \citet{ramanath2021efficient} further developed efficient projection algorithms and integrated these ideas into a reusable Scala/Spark system (DuaLip).

However, prior implementations were tightly coupled to CPU-centric data formats and rigid declarative schemas. In particular, the system exposed only a small number of fixed templates (e.g., single-block matching and multi-objective optimization), making it difficult to express problems with multiple interacting constraint families or more general block structure without modifying the solver internals. These limitations motivated a redesign centered on composability, flexible constraint specification, and GPU-native execution.

\textbf{High-performance tensor frameworks.}  
Our implementation leverages modern array-programming frameworks such as PyTorch~\cite{paszke2019pytorch}, enabling concise expression of sparse–dense operations, batched projections, and distributed collectives. Unlike custom low-level kernels tailored to a fixed LP abstraction, this approach allows rapid iteration over algorithmic variants and structural formulations while retaining high performance on contemporary GPU clusters.

\medskip
Together, these strands of work inform the programming model, algorithmic design, and distributed execution mechanisms underlying the present solver.

\section{Problem Setting}
Many large-scale optimization tasks in industry reduce to solving very large linear programs on a recurring cadence, e.g., daily, weekly, or per model update. Examples include email and notification allocation in ECLIPSE \cite{basu2020eclipse}, content and opportunity assignment in DuaLip \cite{ramanath2021efficient}, and recent marketplace-shaping problems such as promoting inactive members \cite{acharya2023promoting}. In each case, the LP instance changes as new scores, eligibility signals, or demand arrive, but the underlying structure of the formulation remains stable.

In such recurring regimes, the primary challenge is not merely solving a single large LP to high precision, but ensuring stability, scalability, and fast convergence as the data change across runs. Classical LP formulations are inherently non-strongly convex and often exhibit degeneracy. Introducing $\ell_2$ regularization in the primal induces strong convexity, yielding a smooth dual objective with Lipschitz-continuous gradients. This results in faster convergence under first-order dual methods, and creates a well-conditioned structure that is amenable to distributed decomposition. In recurring industrial workloads, these properties are often more critical than attaining an exact vertex solution of the original LP.

\subsection{Preliminaries}\label{sxn:prelim}
We are given an LP of the form:
\begin{gather}
    \min~~\cbb^T\xb~~\text{such that }\Ab\xb \leq \bb, \xb \in \Ccal, \label{eqn:LP}
\end{gather}
where $\xb \in \R^n$ is the vector of primal variables, $\cbb$ is the vector of primal objective coefficients, $\Ab \in \R^{m \times n}$ and $\bb$ specify the ``complex'' constraints, and $\Ccal$ denote the ``simple'' constraints polytope.

We adopt the algorithmic approach of ECLIPSE \cite{basu2020eclipse}, which perturbs the LP with ridge regularization and maximizes the resulting dual function using gradient-based methods. This approach is particularly efficient when the primal variables lie in certain common polytopes, which we describe later. Concretely, ECLIPSE augments the primal objective in Eq. (1) with a ridge term $\frac{\gamma}{2}\|\mathbf{x}\|_2^2$ and solves the resulting Lagrangian relaxation.

Let $\lambdab \in \R^m$ denote the dual values of the complex constraints. The dual objective is defined as:
\begin{gather}
    g(\lambdab) = \min_{\xb \in \Ccal} \cbb^T\xb + \frac{\gamma}{2}\|\xb\|_2^2 + \lambdab^T(\Ab\xb - \bb)
\end{gather}
By Danskin's theorem, $\nabla \gb(\lambdab)$ may be efficiently computed by $\nabla \gb(\lambdab) = \Ab\xb_\gamma^*(\lambdab) - \bb$, where $\Pi$ is the projection to $\Ccal$, and: 

\begin{gather*}
    \xb_\gamma^*(\lambdab) = \Pi_{\Ccal}\left(\frac{-1}{\gamma}(\Ab^T\lambdab + \cbb)\right).
\end{gather*}
Any optimization algorithm that depends only on the gradient $\nabla \gb(\lambdab)$ may then be used to maximize $g(\lambdab)$ over $\lambdab \geq \zero$, and hence the perturbed primal LP by strong duality. We take this algorithmic approach due to the substantial reduction in communication overhead when $\Pi$ can be applied in a distributed manner. This is the case for many important applications of extreme-scale linear programming in industry. 

\subsection{Matching Problem Formulation}


Despite the generality of our approach, we focus on extreme-scale matching problems in which users are assigned to items subject to shared impression-capacity constraints. Formally, let the decision variable $x_{ij}$ denote the assignment of a source (e.g., user) $i \in [I]$ to a destination (e.g., campaign or item) $j \in [J]$. For each user $i$, define the block vector
\[
\mathbf{x}_{i} := (x_{i1}, \ldots, x_{iJ})^\top,
\]
and stack these blocks to form the full decision vector
\[
\mathbf{x} = \big[\, \mathbf{x}_{1}; \ldots; \mathbf{x}_{I} \,\big].
\]
Let $E \subseteq [I] \times [J]$ denote the set of feasible source–destination pairs.

The coupling constraints arise from shared destination-level requirements—such as budget, pacing, frequency, or delivery limits that apply collectively to all users eligible for a given destination. For instance:

\begin{align}
\sum_{i \in [I]} a_{ij} x_{ij} &\le d_j 
&& \forall j \in [J], \quad \text{(destination capacity)}
\end{align}
In matrix form, $\Ab$ can be written as 
\[
\Ab =
\begin{bmatrix}
a_{11} & \cdots & 0 & a_{21} & \cdots & 0 & a_{I1} & \cdots & 0 \\
\vdots & \ddots & \vdots & \vdots & \ddots & \vdots & \vdots & \ddots & \vdots \\
0 & \cdots & a_{1J} & 0 & \cdots & a_{2J} & 0 & \cdots & a_{IJ}
\end{bmatrix}, 
\]
which can be viewed as a horizontal concatenation of diagonal blocks. 
Consequently, $\Ab$ is highly sparse: entries are zero both due to the block-diagonal structure and because many coefficients $a_{ij}$ vanish for ineligible pairs. This structured and unstructured sparsity is a key property that we exploit to enable efficient matrix--vector multiplications, as discussed in Section~\ref{sec:gpu_enhancement}.

In addition, we often impose user-level impression capacity constraints that can be expressed as simple ``box'' or ``simplex'' constraints~\cite{ramanath2021efficient}, applied independently to each block $\mathbf{x}_{i} \in \Ccal_i$. In extended form, these constraints are written as:
\begin{align}
\sum_{j \in [J]} x_{ij} &\le 1 
&& \forall i \in [I], \quad \text{(per-user capacity)} \\
0 \le x_{ij} 
&&& \forall (i,j) \in E. \quad \text{(feasibility)}
\end{align}

\paragraph{Simple constraints}
In the above example, constraints (4), (5) form simple polytopes (simplex) that admit efficient projection algorithms. More generally, when assigning user to items the impression capacity of a user may be represented through a ``box-cut'' or ``simplex'' inequality constraint, which can be applied to each block $\xb_{i*}$ independently \cite{ramanath2021efficient}. We refer to these constraints as \emph{simple constraints}. See~\cite{ramanath2021efficient} for detailed examples of such polytopes and their associated projection procedures.

Exploiting this separable structure allows the user-side constraints to be handled implicitly within the primal subproblem, without introducing explicit dual variables for them. Consequently, when matching $I$ users to $J$ items, the number of explicit dual variables can be reduced from $I + J$ to $J$. This substantially reduces communication overhead, since only the destination-level dual variables must be synchronized across GPUs when computing $\nabla \gb(\boldsymbol{\lambda})$.

\paragraph{Matching constraints}
We now describe the structure of the complex constraints $\Ab\xb\le \bb$ for matching problems. This structure closely follows the formulation in \textsc{ECLIPSE}~\cite{basu2020eclipse}, which motivated the modern ridge-regularized dual-ascent approach, but is more general than the version implemented in Scala/Spark DuaLip, which permitted at most a single matching block. Here we allow an arbitrary number of matching constraint families, each interacting with user blocks in a simple element-wise fashion.

In matching problems, the complex constraints couple many users through shared destinations—budget, pacing, frequency, or delivery constraints that apply to all $i$ such that $(i,j)\in E$. Because each such constraint enters the LP in an element-wise manner on each $\xb_{(i)}$, the constraint matrix has a diagonal block structure across sources.

\vspace{0.5cm}
\begin{definition}[Complex constraint matrix for matching problems]
\label{def:matching-constraints}
Given decision variables $\xb$ stacked as above, the complex constraint matrix $\Ab \in \R^{mJ\times IJ}$ is defined blockwise across sources as
\begin{equation}
\label{eq:A-matching}
\Ab \;=\;
\begin{bmatrix}
\Db_{11} & \Db_{12} & \cdots & \Db_{1I}\\
\Db_{21} & \Db_{22} & \cdots & \Db_{2I}\\
\vdots   & \vdots   & \ddots & \vdots  \\
\Db_{m 1} & \Db_{m 2} & \cdots & \Db_{m I}
\end{bmatrix},
\qquad
\Db_{k i} \in \R^{J\times J}
\text{ diagonal for every }(k,i),
\end{equation}
where $k=1,\dots,m$ indexes distinct constraint families. Each diagonal block $\Db_{k i}$ acts element-wise on the $i$th variable block, $\xb_{(i)}$.
\end{definition}

This formulation captures the structured matching constraints used in \textsc{ECLIPSE} while allowing multiple such families (budget, pacing, fairness, frequency, etc.) to appear simultaneously. The diagonal structure of $\Db_{ki}$ is a key property exploited throughout the paper: it simplifies the dual gradient, enables efficient sparse layouts and projection batching on GPUs, and allows for sharp conditioning results such as those in Section~\ref{sxn:algorithm_enhancments}.

\section{Programming Model and Library Architecture}
\label{sxn:programming-model}

The previous Scala Spark version of DuaLip \cite{dualipcode} enabled the first production deployments, but it supported only two rigid schemas and relied on an object model that made extensions costly. Its runtime also fit poorly with common numerical tooling and Python workflows, so diagnostics and tuning were largely manual. We re‑architect LP solving in an \emph{imperative, operator‑level array/tensor programming model} (PyTorch‑style, define‑by‑run) rather than a task‑level “call a solver” API. Concretely, the hot path is an explicit dataflow of sparse matrix–vector operators \((\Ab \xb,\ \Ab^\top \lambdab)\) and blockwise projections, orchestrated by a small maximizer. This boundary is deliberate: it exposes the kernels that dominate runtime, lets us vary sparse layouts/kernels and projection operators, and maps cleanly to GPU execution, without rewriting the solve loop.

The library is built to support this \emph{general} setting: LPs with (i) simple constraints that decompose cleanly over blocks of variables, making projection-based methods natural, and (ii) sparse complex constraints for which the dominant operations are repeated applications of $\Ab \xb$ and $\Ab^\top \lambdab$. This covers a broad class of assignment, allocation, and routing problems encountered by vertical teams across the organization, and is the level at which the programming model and solver interface are designed.

Within this general template, \emph{matching} problems are a particularly important and pervasive subclass. Here, each “source’’ has its own block of variables and simple constraints, while cross-source constraints, e.g., budgets, pacing, and frequency caps, interact with each block through simple element-wise terms. As shown in prior work, this yields a distinctive diagonal block structure in the complex constraint matrix. We formalize this structure in Definition~\ref{def:matching-constraints} and show that it enables highly efficient GPU kernels, batched projection operators, and effective preconditioning. While the focus of this paper is on matching workloads, the library itself is not restricted to a single matching block or even to matching constraints alone; the same abstractions apply to any LP whose simple constraints decompose over blocks and whose complex constraints can be implemented via sparse operator calls.

While the remainder of the paper uses matching LPs as a running example, the library interfaces in this section make no reference to matching-specific notions. The library consists of three decoupled components with minimal contracts: 

\begin{table}[h]
\centering
\renewcommand{\arraystretch}{1.15}
\setlength{\tabcolsep}{6pt}
\begin{tabular}{@{}p{3.0cm}p{5.2cm}p{7.2cm}@{}}
\hline
\textbf{Component} & \textbf{Constructed from} & \textbf{Must provide (single method)} \\
\hline
\textbf{Maximizer} &
Algorithm settings (step policy, iteration budget, flags) &
\texttt{maximize(obj, initial\_value)} $\rightarrow$ \texttt{Result} \\
\textbf{ObjectiveFunction} &
LP tensors and metadata ($\Ab,\bb,\cbb$) plus a supplied \texttt{ProjectionMap} &
\texttt{calculate($\lambda$, $\gamma$)} $\rightarrow$ \texttt{ObjectiveResult} \\
\textbf{ProjectionMap} &
Mapping from primal columns/blocks to projection operators (simplex, box-cut, \dots) &
\texttt{project(block\_id, v)} $\rightarrow$ projected $v$ \\
\hline
\end{tabular}
\caption{Core objects, what constructs them, and the single required method each exposes.}
\label{tab:arch-minimal}
\end{table}

A new LP formulation implements an \texttt{ObjectiveFunction}, which encapsulates the LP parameters and a method for computing $\nabla \gb(\lambdab)$ from a dual value $\lambdab$. In practice, this involves ensuring that the matrix-vector products $\Ab\xb$ and $\Ab^T\lambdab$ can be efficiently represented and computed on GPU. Additionally, the computation of the gradient requires computing the projection $\Pi(\xb_\gamma^*(\lambdab))$ to the simple constraint polytope, $\Ccal$. The simple constraints tend to be broadly applicable, e.g., per-user simplex constraints or unit box constraints, and so the projection operators may be reused across new LP formulations. The \texttt{ObjectiveFunction} class provides an encapsulation of the LP formulation that exposes a method to compute the gradient. This means that \texttt{Maximizer} may be a variety of optimization algorithms that perform the dual ascent of $\gb(\lambdab)$ over the dual feasible space $\lambdab \geq \zero$.

We take the stance that solving extreme-scale LPs necessitates a level of efficiency that requires flexible low-level modification of the core algorithmic primitives, particularly the gradient computation, in order to be applied across various formulations in practice. At the same time, it is necessary that these modifications take minimal engineering effort, and that shared primitives and sensible defaults can be shared across applications. This solver architecture has the advantage that the major algorithmic components are reflected in the code structure, and so modifying any part only require local modifications. The total solver for a use case is then a composition of the high-level components, much like how a model in PyTorch is constructed via  high-level composition of the model architecture, loss function, optimization algorithm, etc.

In contrast, the Spark Scala-based solver provided a single declarative entry point with support for the two strict schemas that then invoke the necessary algorithm components through inheritance hierarchies. We have found that this structure makes it difficult to accommodate new constraints even when they are efficient to represent algorithmically, such as a global count constraint in a matching problem, $\sum_{ij} x_{ij} \leq m$. While it's trivial to compute $\Ab\xb$ and $\Ab^\top \lambdab$ for this constraint, appending it to the matching problem in the Spark Scala solver requires extensive changes across the code base.

The implementation is Python‑native to align with common numerical stacks and to support instrumentation and iteration. Optional hooks for experiment logging/tuning, structured diagnostics, and standard debugging wrap the same calls without altering these interfaces.


\section{Optimization Algorithm}\label{sxn:algorithm_enhancments}

Our solver follows the ridge-regularized dual ascent framework of Section~\ref{sxn:prelim}: given the smoothed dual $g(\lambdab)$, we run a first-order method that accesses $g$ through its gradient $\nabla \gb(\lambdab) = \Ab\xb_\gamma^*(\lambdab) - \bb$ and the projection operators that define $x_\gamma^*(\lambdab)$. The Scala/Spark DuaLip implementation instantiated this framework with a variant of Nesterov Accelerated Gradient Descent (AGD) and a small set of alternative optimizers, which required careful, instance-specific tuning to obtain reliable convergence on large matching problems and was sensitive to the choice of scaling and regularization.

In this section, we keep the same dual formulation but redesign the optimization stack so that a \emph{single} configuration works robustly across the matching workloads of Definition~\ref{def:matching-constraints}. Concretely, we introduce three changes: (i) conditioning of the dual problem via Jacobi-style row normalization, (ii) a simple continuation schedule for the ridge parameter $\gamma$ that balances curvature against approximation error, and (iii) balancing the smoothing of the ridge term along coordinates via primal coordinate scaling.  Conceptually, these techniques apply whenever we have access to the constraint matrix $\Ab$ (and its row/column statistics), the simple-constraint projections, and the smoothed dual objective, so they are compatible with the operator-centric programming model of Section~\ref{sxn:programming-model}. 

\subsection{Algorithm enhancements}

A central area of focus in improving the DuaLip solver was to improve the performance and robustness of the ridge-regularized dual ascent method that underpins this line of work. The basic framework of Section~\ref{sxn:prelim} admits many first-order instantiations; in practice, the dominant convergence issues we observed on production matching workloads stemmed from three sources: (i) poor conditioning of $\Ab \Ab^\top$ in the dual, (ii) tradeoff between convergence speed and solution quality with the ridge regularization term and (iii) heterogeneous scales in the primal variables that interact badly with the quadratic regularizer. We address these in turn via Jacobi-style row normalization, a simple regularization schedule, and primal scaling. 


\paragraph{Jacobi preconditioning / row normalization.}
Our first enhancement is to improve the conditioning of the dual problem by rescaling the complex constraints. We assume $\Ab$ is full row rank. Intuitively, when some rows of $\Ab$ have much larger norms than others, gradient steps for the smoothed dual move too cautiously in some directions and too aggressively in others. 

When the projection operator is inactive (or the identity), the dual gradient reduces to
\[
\nabla \gb(\lambdab)\;=\;-\frac{1}{\gamma}\bigl(\Ab\Ab^\top\lambdab+\Ab\cbb\bigr)\;-\;\bb,
\]
so the Hessian is
\[
\nabla^2 \gb(\lambdab)\;=\;-\frac{1}{\gamma}\,\Ab\Ab^\top.
\]
The convergence of first-order methods on $g$ is therefore governed by the conditioning of $\Ab\Ab^\top$.

We apply a standard row-scaling transform. Let
\[
\Db=\mathrm{diag}\!\bigl(\|\Ab_{1*}\|_2^{-1},\ldots,\|\Ab_{m*}\|_2^{-1}\bigr)
\]
for all rows with nonzero norm (rows with $\|\Ab_{r*}\|_2=0$ are redundant and may be dropped or left unscaled with $\Db_{rr}=1$), and define the row-scaled system
\[
\Ab'=\Db\Ab,\qquad \bb'=\Db\bb.
\]
Because $\Db$ has positive diagonal entries, row scaling preserves the feasible set exactly:
\[
\{\,\xb:\Ab\xb\le\bb\,\}\;=\;\{\,\xb:\Ab'\xb\le\bb'\,\}.
\]
Moreover,
\[
\Ab'\Ab'^\top\;=\;\Db(\Ab\Ab^\top)\Db,
\qquad
\Db^2=\mathrm{diag}(\Ab\Ab^\top)^{-1}\ \ \text{(on nonzero rows)},
\]
so row normalization is precisely Jacobi preconditioning of the dual Hessian $-\nabla^2 \gb(\lambdab)=\frac{1}{\gamma}\Ab\Ab^\top$.

For the matching constraint matrix of Definition~\ref{def:matching-constraints}, $\Ab$ is a horizontal concatenation of diagonal subblocks across sources, so $\Ab\Ab^\top$ is a sum of (nearly) diagonal matrices (one per source) and is therefore close to diagonal in practice. Enforcing $\mathrm{diag}(\Ab'\Ab'^\top)=\Ib$ tightly clusters the spectrum; in the ideal diagonal/orthogonal case it yields $\Ab'\Ab'^\top=\Ib$ and condition number~$1$.

The following lemma formalizes this intuition under a simple statistical model for the matching blocks.

\vspace{0.5cm}
\begin{lemma}
\label{lemma:precon}
Let $\Ab=[\Ab_1~\cdots~\Ab_I]\in\R^{mJ\times IJ}$ with user blocks
$\Ab_i\in\R^{mJ\times J}$ that are i.i.d.\ across $i$ and diagonal by rows as in
Definition~\ref{def:matching-constraints}. Let $\Db_{\mathrm{exp}}=\mathrm{diag}\!\big(\EE\|\Ab_{1*}\|_2^2,\ldots,\EE\|\Ab_{m_2*}\|_2^2\big)^{-1/2}$
and $\widetilde{\Ab}=\Db_{\mathrm{exp}}\Ab$. Then $\mathrm{diag}\!\big(\EE[\widetilde{\Ab}\widetilde{\Ab}^\top]\big)=\Ib$.
If, in addition, for $r\neq s$,
\[
\EE\!\big[\langle \Ab_{r*},\Ab_{s*}\rangle\big]
\;\le\; \eta\,\sqrt{\EE\|\Ab_{r*}\|_2^2\,\EE\|\Ab_{s*}\|_2^2}
\quad\text{with }\eta\in[0,1),
\]
then
\[
\kappa\!\big(\EE[\widetilde{\Ab}\widetilde{\Ab}^\top]\big)
\;\le\; \frac{1+(m-1)\eta}{\,1-(m-1)\eta\,}.
\]
\end{lemma}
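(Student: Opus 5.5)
The plan is to compute $\EE[\widetilde{\Ab}\widetilde{\Ab}^\top]$ entrywise, then bound its spectrum with Gershgorin's circle theorem. Throughout, write $\Mb:=\EE[\Ab\Ab^\top]$. Its entries are $\Mb_{rs}=\EE\langle \Ab_{r*},\Ab_{s*}\rangle$ and $\Mb_{rr}=\EE\|\Ab_{r*}\|_2^2$. The number of rows of $\Ab$ is $m$ in the display for $\Db_{\mathrm{exp}}$ and in the bound; I read $m_2$ as this row count and index rows by $r,s$. I assume every $\Mb_{rr}>0$, since otherwise $\Db_{\mathrm{exp}}$ is undefined; zero rows can be dropped as in the discussion above. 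The i.i.d.\ structure across users is only needed to guarantee that $\Mb$ is well defined, $\Mb=I\,\EE[\Ab_1\Ab_1^\top]$. The argument does not otherwise use it beyond linearity of expectation.

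\textbf{Step 1 (unit diagonal).} Since $\Db_{\mathrm{exp}}$ is deterministic, linearity of expectation gives
\[
\EE[\widetilde{\Ab}\widetilde{\Ab}^\top]=\Db_{\mathrm{exp}}\Mb\Db_{\mathrm{exp}}.
\]
Its $(r,s)$ entry is therefore
\[
\frac{\Mb_{rs}}{\sqrt{\Mb_{rr}\Mb_{ss}}}.
\]
On the diagonal this equals $1$, which proves the first claim.

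\textbf{Step 2 (off-diagonal bound).} For $r\neq s$, the hypothesis gives the one-sided bound
\[
\frac{\Mb_{rs}}{\sqrt{\Mb_{rr}\Mb_{ss}}}\le\eta.
\]
Gershgorin needs the absolute value of this entry bounded by $\eta$. When all entries of $\Ab$ are nonnegative the inner products are nonnegative, and the one-sided bound already controls the absolute value. Nonnegativity holds in the matching setting, where the $a_{ij}$ are impression or budget coefficients. In general I would interpret the hypothesis as a bound on $|\EE\langle\Ab_{r*},\Ab_{s*}\rangle|$, and I will state this explicitly as the reading being used.

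\textbf{Step 3 (Gershgorin and the condition number).} The matrix $\widetilde{\Mb}:=\Db_{\mathrm{exp}}\Mb\Db_{\mathrm{exp}}$ is symmetric, and it is positive semidefinite as an expectation of Gram matrices, so its eigenvalues are real. Each Gershgorin disc is centered at $1$ with radius $\sum_{s\neq r}|\widetilde{\Mb}_{rs}|\le (m-1)\eta$. Hence every eigenvalue lies in $[\,1-(m-1)\eta,\;1+(m-1)\eta\,]$. Provided $(m-1)\eta<1$, the smallest eigenvalue is positive, so $\widetilde{\Mb}$ is positive definite and
\[
\kappa(\widetilde{\Mb})=\frac{\lambda_{\max}}{\lambda_{\min}}\le\frac{1+(m-1)\eta}{1-(m-1)\eta}.
\]

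\textbf{Main obstacle.} The Gershgorin argument itself is routine; the work is in handling two hypotheses precisely.

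1. The condition $\eta\in[0,1)$ does not by itself give $(m-1)\eta<1$. When $(m-1)\eta\ge 1$ the stated bound is vacuous, or negative and hence meaningless, so I will state explicitly that the bound is asserted only in the regime $(m-1)\eta<1$.

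2. The sign issue of Step 2 must be resolved by one of two routes: invoke nonnegativity of the constraint coefficients, or read the hypothesis as a bound on the absolute value of the expected inner product.

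I would also remark on the block structure from Definition~\ref{def:matching-constraints}. Rows belonging to different destinations $j$ have disjoint supports in every diagonal block, so their inner products vanish exactly. Consequently $\eta$ only needs to control correlations between rows of the same destination $j$ across constraint families $k$. This means the effective Gershgorin radius is in fact $(m_{\text{fam}}-1)\eta$, where $m_{\text{fam}}$ is the number of constraint families, which sharpens the stated bound.
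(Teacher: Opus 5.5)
Your proof is correct and follows the paper's route: normalize entrywise so that $\EE[\widetilde{\Ab}\widetilde{\Ab}^\top]=\Db_{\mathrm{exp}}\EE[\Ab\Ab^\top]\Db_{\mathrm{exp}}$ has unit diagonal and off-diagonal entries bounded by $\eta$, then apply Gershgorin. Your caveats are all legitimate points that the paper's proof passes over silently. It uses $|\EE\langle \Ab_{r*},\Ab_{s*}\rangle|$ although the hypothesis is one-sided, and it never requires $(m-1)\eta<1$. It also counts only $m-1$ off-diagonal terms per row even though $\Ab$ has $mJ$ rows, so your block-structure remark (rows for different destinations have disjoint supports) is what actually justifies the stated $(m-1)\eta$ radius, rather than merely sharpening it.
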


Thus, under mild cross-row correlation assumptions, row normalization nearly equalizes the eigenvalues of $\Ab\Ab^\top$ in expectation for matching workloads, which stabilizes dual first-order updates.

\paragraph{Regularization decay} The preconditioning above addresses the dominant factor when the simple constraints are inactive: in that regime, the condition number of $\Ab\Ab^\top$ controls the convergence rate of first-order methods. However, the ridge parameter $\gamma$ enters separately through the \emph{smoothness} (and effective strong convexity) of the smoothed dual. 

Next, we consider how to set the regularization hyperparameter over the course of optimization. While $\gamma$ does not affect the conditioning of $\Ab\Ab^\top$, it has a substantial effect on the smoothness of the objective. On the one hand, one prefers small $\gamma$ to avoid perturbing the original LP. Lemma 2 in \citet{basu2020eclipse} guarantees that there is always a range of sufficiently small $\gamma$ values that allow exact recovery of a solution to the un-smoothed LP. On the other hand, the Lipschitz constant of the gradient is only bounded by $\|\Ab\|_2^2/\gamma$ (see Lemma 3 in \cite{basu2020eclipse}), so very small $\gamma$ leads to a poorly conditioned dual problem and slow progress in practice.

To balance convergence speed and solution fidelity, we implement a simple continuation scheme in which $\gamma$ is initialized at a moderately large value (for stable, fast early progress) and then decayed on a pre-specified schedule and rate as the algorithm approaches a good dual solution. Since $\gamma$ directly affects the smoothness of the dual objective, we scale the maximum AGD step size proportionally with the decay of $\gamma$ to maintain stability across transition points. Overall, this approach allows for faster convergence in early iterations, while ensuring the final solution is not greatly perturbed from the true LP solution.

\paragraph{Primal scaling}
A limitation of the ridge-regularized formulation used in \textsc{ECLIPSE} and \textsc{DuaLip} is that the quadratic smoothing term $\frac{\gamma}{2}\|\xb\|_2^2$ implicitly assumes all coordinates of $\xb$ are on a comparable scale.  
When some entries of $\xb$ differ in magnitude by several orders, the regularization either dominates small coordinates or becomes negligible for large ones.  
For moderate $\gamma$, the term $\frac{\gamma}{2}\|\xb\|_2^2$ can distort the objective geometry and shift the optimal solution away from the true LP optimum, while for extremely small $\gamma$ the smoothing becomes ineffective and the dual landscape loses the curvature needed for fast convergence.

A simple remedy is to introduce a diagonal scaling of the primal coordinates.  
Let $\vb\in\R^n_{>0}$ be a vector of positive scale factors, and define the scaled variables:
\[
\zb \;=\; \Db_v\,\xb, \qquad \Db_v := \mathrm{diag}(\vb),
\]
so that $\xb=\Db_v^{-1}\zb$.  
Substituting into the regularized primal 
yields
\[
\min_{\zb \in \Db_\vb^{-1}\Ccal}\; (\Db_v^{-1}\cbb)^\top\zb \;+\; 
\tfrac{\gamma}{2}\|\zb\|_2^2
\quad\text{s.t.}\quad \Ab\Db_v^{-1}\zb\le\bb, 
\]
where \[
\Db_{\vb}^{-1}\Ccal
\;:=\;
\{\zb \in \mathbb{R}^n \;:\; \Db_{\vb}^{-1}\zb \in \Ccal\}.
\]

Equivalently, 
\[
\min_{\xb\in\Ccal}\; \cbb^\top\xb \;+\; \tfrac{\gamma}{2}\|\Db_v\xb\|_2^2
\quad\text{s.t.}\quad \Ab\xb\le\bb.
\]
Hence, two algebraically equivalent perspectives arise:
(i) generalize the regularizer to $\tfrac{\gamma}{2}\|\xb\|_{\Db_v^{2}}^2 = \frac{\gamma}{2}\xb^\top\Db_v^{2}\xb$, i.e.\ replace $\gamma\Ib$ by $\gamma\Db_v^{2}$, or
(ii) rescale the primal space by $\Db_v$, setting
\[
\cbb' = \Db_v^{-1}\cbb,\qquad 
\Ab' = \Ab\Db_v^{-1}.
\]
Either view leads to the same optimization problem after change of variables.
Choosing $\vb$ according to typical magnitudes of the primal coordinates
(or the column norms of $\Ab$)
balances the contribution of each coordinate in the ridge term and normalizes the effective curvature of the smoothed dual.
This scaling prevents the regularizer from either overwhelming or vanishing on particular coordinates, yielding a well-conditioned dual objective for any moderate~$\gamma$.




\section{GPU System/Algorithm Enhancements}\label{sec:gpu_enhancement}
We now turn to the system design needed to run this solver efficiently on modern hardware. In production settings, new formulations must converge reliably with minimal engineering effort and without extensive hyperparameter retuning. The ridge-regularized dual ascent algorithm we adopt from prior work is particularly well suited to this regime: its smoothing improves stability across recurring solves, and each iteration parallelizes cleanly across blocks while requiring synchronization only of the dual variables. This makes the method naturally compatible with distributed GPU execution and low-overhead deployment of new, stable formulations.

The matching LPs of Definition~\ref{def:matching-constraints} expose massive sparse linear operators and blockwise projections that map well to GPUs, but the original Scala/Spark DuaLip stack is tied to a JVM/CPU runtime and cannot easily leverage accelerator-friendly layouts or communication patterns. Our goal in this section is to capture most of the performance benefits that custom C++/CUDA implementations (e.g., PDLP-style systems) would offer, while keeping the implementation lightweight by building on PyTorch’s sparse and distributed primitives. We do so by choosing sparse tensor layouts that respect the matching constraint structure, batching projections to avoid tiny kernels, and using a simple multi-GPU execution pattern that reuses the same code across formulations.

\paragraph{Sparse tensor computation} 
In practice, the constraint matrix in Definition \ref{def:matching-constraints} has two sources of sparsity (i) structured sparsity from the block diagonal form, and, (ii) unstructured sparsity within the diagonal blocks since only a fraction of source-destinations pairs are eligible due to external factors. 

We may efficiently operate on such constraint matrices as follows. Let $\Ab$ as in Definition \ref{def:matching-constraints} with a single matching constraint family, i.e., 
\begin{gather*}
    \Ab = [\Db_1, \hdots, \Db_I], \quad \xb = [\xb_{(1)};\hdots; \xb_{(I)}]
\end{gather*}


We exploit both the structured and unstructured sparsity by storing constraint matrices in Compressed Sparse Column (CSC) format with columns ordered by destination \(j\), so all variables corresponding to a given source live contiguously in memory. Each column of the sparse tensor representation,  $\Tcal$, is such that $\Tcal[:][i] = \diag(\Db_i)$.
 
At fixed numeric precision, CSC stores exactly the structure we need: one integer pointer per column, one integer row index per nonzero, and one value per nonzero. Additionally, this allows us to immediately invoke the entry-wise arithmetic operations for the dual ascent algorithm with optimized GPU tensor operations. In contrast, the tuple/sequence approach used in the Scala version incurs pointer/boxing overhead, poorer cache locality, and conversion/sorting before math kernels, all of which raise memory traffic and wall-time without adding information.

\paragraph{Batched projection operator} Columns in the CSC layout tensor, $\Tcal$, align with the destination-partition of \citet{basu2020eclipse}, which is ideal for sparse matrix-vector products when leveraging the block diagonal structure, but it is still inefficient for per-slice projections if launched one column at a time on GPU (tiny kernels, launch overhead, low occupancy). We therefore keep the CSC tensor representation but execute projections on \emph{dense, padded batches}. 

The batching scheme uses logarithmically spaced buckets: we group columns by slice length into ranges \([2^{t-1}, 2^t)\). For each bucket we gather the relevant slices into a single dense slab (padding to the bucket’s upper bound), apply one batched projection kernel (e.g., unit-box clamp or simplex), and scatter the results back. Geometric bucketing limits padding waste to below a factor of two within each bucket, so temporary memory remains within a small constant factor of the true nonzeros, while the number of GPU launches is only \(1+\lfloor \log_2 s_{\max}\rfloor\), where \(s_{\max}\) is the maximum slice length.

This turns many irregular, low-occupancy calls into a handful of high-occupancy, coalesced operations without abandoning the compact CSC storage. It also integrates cleanly with PyTorch’s batched dense kernels, while the prior sequence-of-tuples style hampers efficient packing and requires extra conversions before math.

\paragraph{Distributed GPU communication}
  We use \texttt{torch.distributed} with the \texttt{nccl} backend and a single
  process group across all GPUs.  Columns of $\Tcal$ (and $\cbb$, consistently)
  are partitioned across devices in a balanced column split of the CSC-format
  matrices, while $\lambdab$ and the constraint vector $\bb$ are replicated on
  every device.  Rank~0 generates and partitions the full dataset on CPU, then distributes each
  rank's column partition via a collective \texttt{scatter}, followed by a
  \texttt{broadcast} of the shared constraint vector~$\bb$.  For a
  fixed $\lambdab$, each device computes its local slice of
  $\xb_\gamma^*(\lambdab)$ and its contribution to the dual quantities,
  leveraging the simple-constraint decomposition (per-$i$ projections and
  destination-separable blocks) to avoid cross-device dependencies during
  computation.

The communication volume per step depends only on the dual dimension $|\lambdab|$, not on the number of nonzeros or the per-GPU column partition. Each iteration proceeds as follows:
\begin{enumerate}[nosep]
\item Every rank computes its local gradient contribution, local objective
      value, and local regularization penalty concurrently on its own CUDA
      device.
\item A \texttt{reduce} (SUM, to rank~0) aggregates the gradient
      ($|\lambdab|$ floats) and two scalars (objective, regularization).
\item Rank~0 performs the accelerated gradient-ascent update, producing new
      iterates $\lambdab_1$ and $\lambdab_2$.
\item Two \texttt{broadcast} calls (from rank~0) distribute updated
      $\lambdab_1$ and $\lambdab_2$ vectors, which form the momentum state used in the AGD update, to all other ranks. 
\end{enumerate}
The per-step communication therefore consists of one \texttt{reduce} and two
\texttt{broadcast} operations, each of size $|\lambdab|$, plus $O(1)$ scalar
reductions---independent of sparsity and the per-GPU column split.  The same
pattern applies in single-node multi-GPU and multi-node deployments.

\section{Experiments}

We evaluate the proposed PyTorch implementation of DuaLip along three axes:
(i) numerical parity with the production Scala implementation,
(ii) system-level performance and multi-GPU scaling, and
(iii) the impact of algorithmic enhancements such as preconditioning and regularization continuation.

We use synthetic matching data to enable controlled scaling of problem size and sparsity. The data generation procedure and complete experimental setup are described in Appendix~\ref{app:experiment_setting}.

\paragraph{Implementation parity} We first verify numerical equivalence between the PyTorch implementation and the original Scala solver. Figure~\ref{fig:scala_pytorch_parity} compares the dual objective trajectories across accelerated gradient descent (AGD) iterations in single- and multi-GPU settings.

The trajectories exhibit near-perfect overlap across all configurations, confirming implementation-level parity. To quantify the discrepancy, Figure~\ref{fig:scala_pytorch_relative_error} reports the relative error in dual objective compared to the Scala solver. In all experiments, the relative error falls below 1\% within the first 100 iterations and continues to decrease thereafter. These results demonstrate that the distributed PyTorch implementation faithfully reproduces the optimization dynamics of the production system.

\begin{figure}[ht!]
    \centering
    \includegraphics[width=0.75\linewidth]{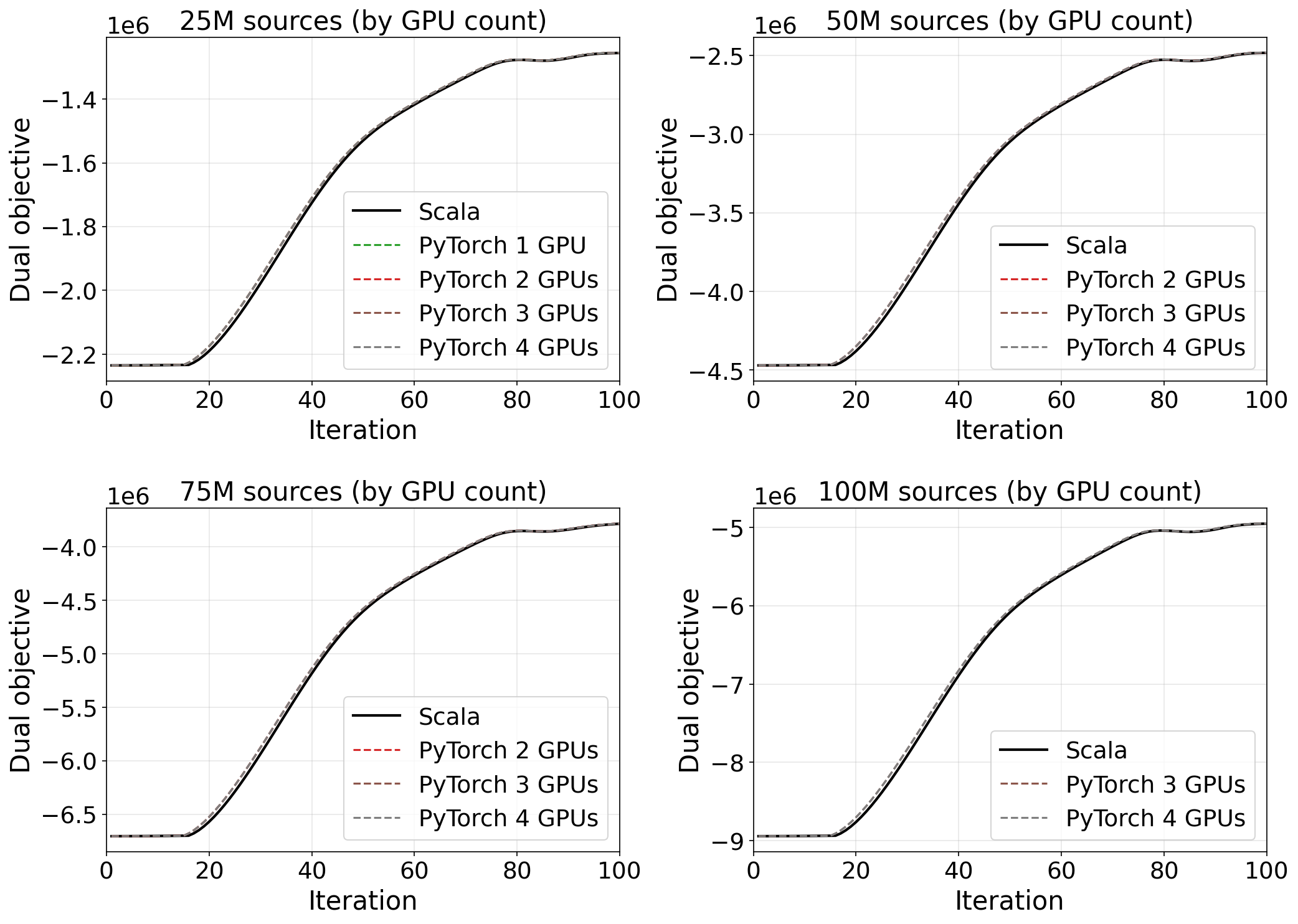}
    \caption{Scala–DuaLip (PyTorch) parity. Each panel shows the dual objective versus AGD iteration for the Scala and PyTorch implementations. The near-perfect overlap confirms numerical equivalence.}
    \label{fig:scala_pytorch_parity}
\end{figure}

\begin{figure}[ht!]
    \centering
    \includegraphics[width=0.75\linewidth]{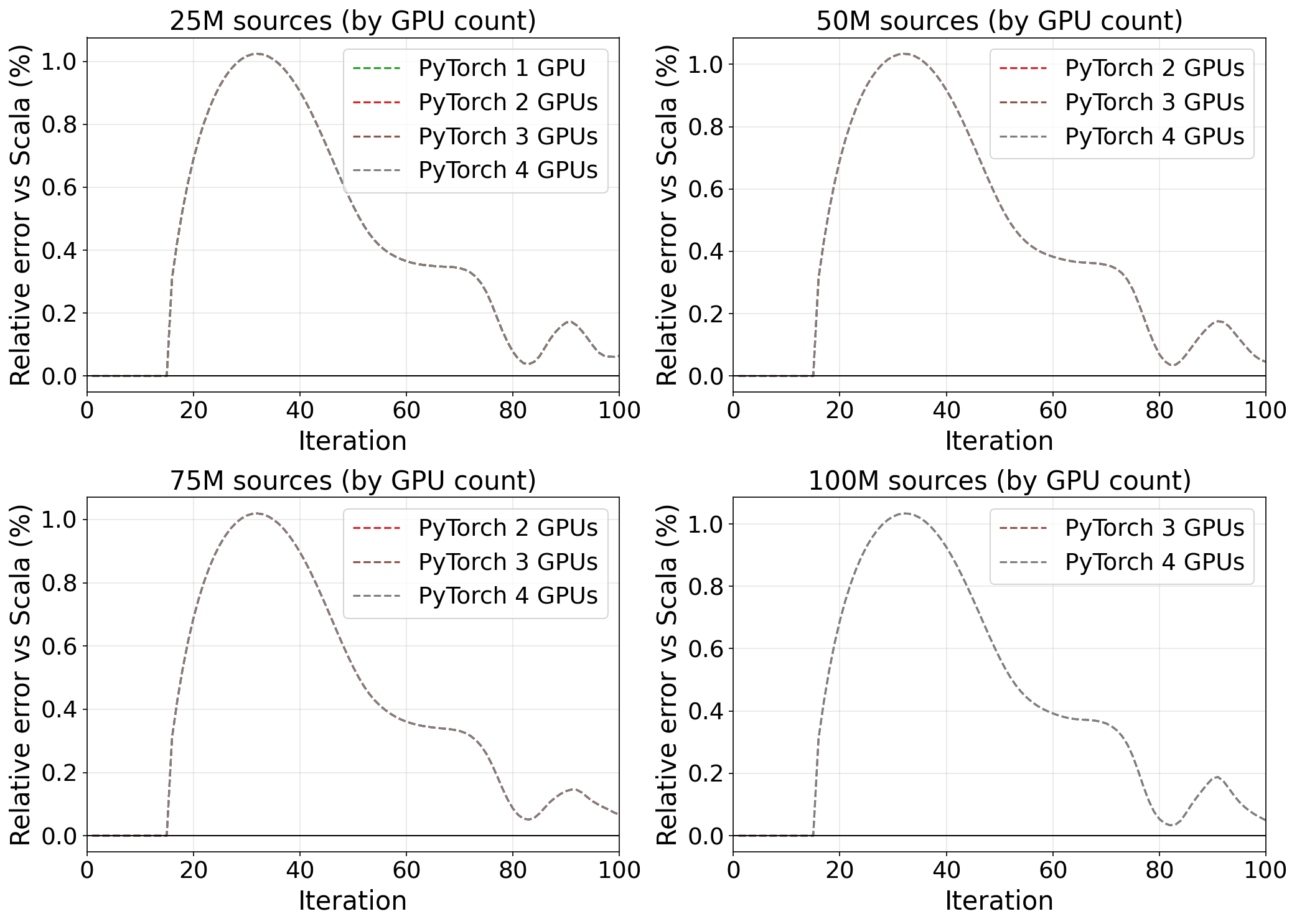}
    \caption{Relative error in dual objective compared to the Scala solver. The error drops below 1\% within the first 100 iterations across all settings.}
    \label{fig:scala_pytorch_relative_error}
\end{figure}

\paragraph{System Performance and Multi-GPU Scaling} We next compare per-iteration runtime between Scala (Spark-based) DuaLip and the PyTorch GPU implementation. Using a fixed random seed, we generate identical problem instances compatible with the Scala solver input schema and evaluate both systems under comparable configurations.

We do not include PDLP/cuPDLP in this runtime comparison, as existing GPU implementations operate on a single device and do not support the multi-GPU sharding regime required for our largest instances. Since our focus here is on scaling to extreme problem sizes under distributed execution, such a comparison would not reflect the intended deployment setting.

Table~\ref{tab:pytorch_scala_time_comparison} reports the average time per AGD iteration. For moderate problem sizes (25M sources), a single GPU already provides an order-of-magnitude improvement over Scala. As the problem size increases, model sharding across multiple GPUs becomes necessary to satisfy memory constraints. Beyond enabling larger instances, multi-GPU parallelization significantly reduces iteration time, achieving more than 10$\times$ speedup relative to Scala and near-linear scaling within the GPU setting.

\begin{table}[t]
\centering
\sisetup{
    table-number-alignment = center,
    round-mode = places,
    round-precision = 2
}
\begin{tabular}{l S *{4}{S}}
\toprule
&  & \multicolumn{4}{c}{\textbf{DuaLip (PyTorch)}} \\
\cmidrule(lr){3-6}
\textbf{Sources} & \textbf{Scala}
& {\textbf{1 GPU}} 
& {\textbf{2 GPUs}} 
& {\textbf{3 GPUs}} 
& {\textbf{4 GPUs}} \\
\midrule
25M  & 2.46 & 0.27 & 0.14 & 0.09 & 0.07 \\
50M  & 3.44 & {-}    & 0.27 & 0.18 & 0.13 \\
75M  & 2.63 & {-}    & 0.43 & 0.29 & 0.21 \\
100M & 3.33 & {-}    & {-}    & 0.37 & 0.27 \\
\bottomrule
\end{tabular}
\caption{Average time per AGD iteration (seconds). Number of destinations = 10{,}000, sparsity = 0.001. Multi-GPU sharding enables larger instances and yields substantial speedups over the Spark-based Scala implementation.}
\label{tab:pytorch_scala_time_comparison}
\end{table}

Figure~\ref{fig:scaling} further illustrates scaling behavior. Solve time decreases consistently as GPUs are added, and the achieved speedup approaches the ideal linear trend (3.86$\times$ on 4 GPUs versus the ideal 4$\times$). This demonstrates that the solver effectively parallelizes gradient computation with minimal communication overhead.

\begin{figure}[h]
    \centering
    \includegraphics[width=0.48\textwidth]{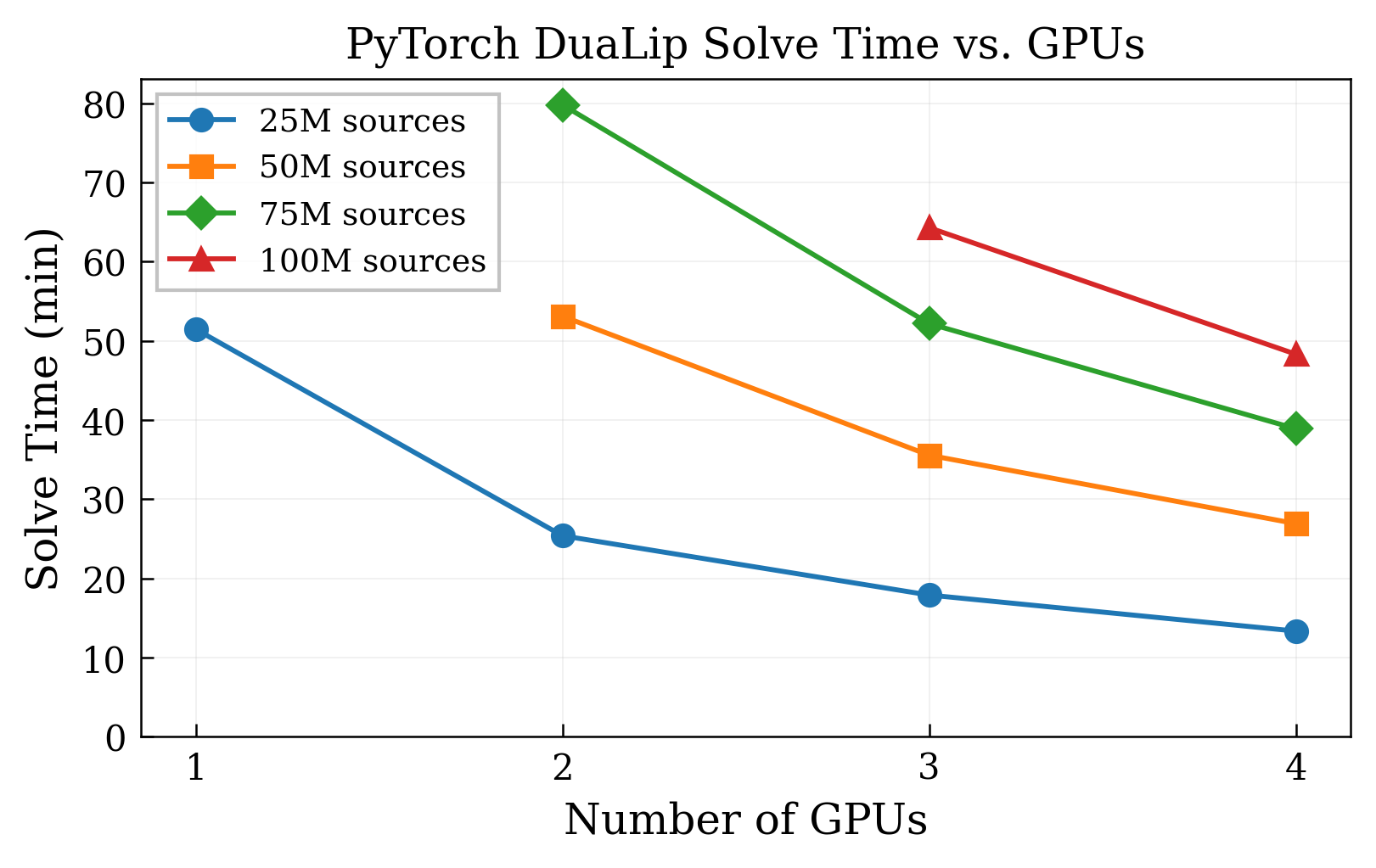}
    \hfill
    \includegraphics[width=0.48\textwidth]{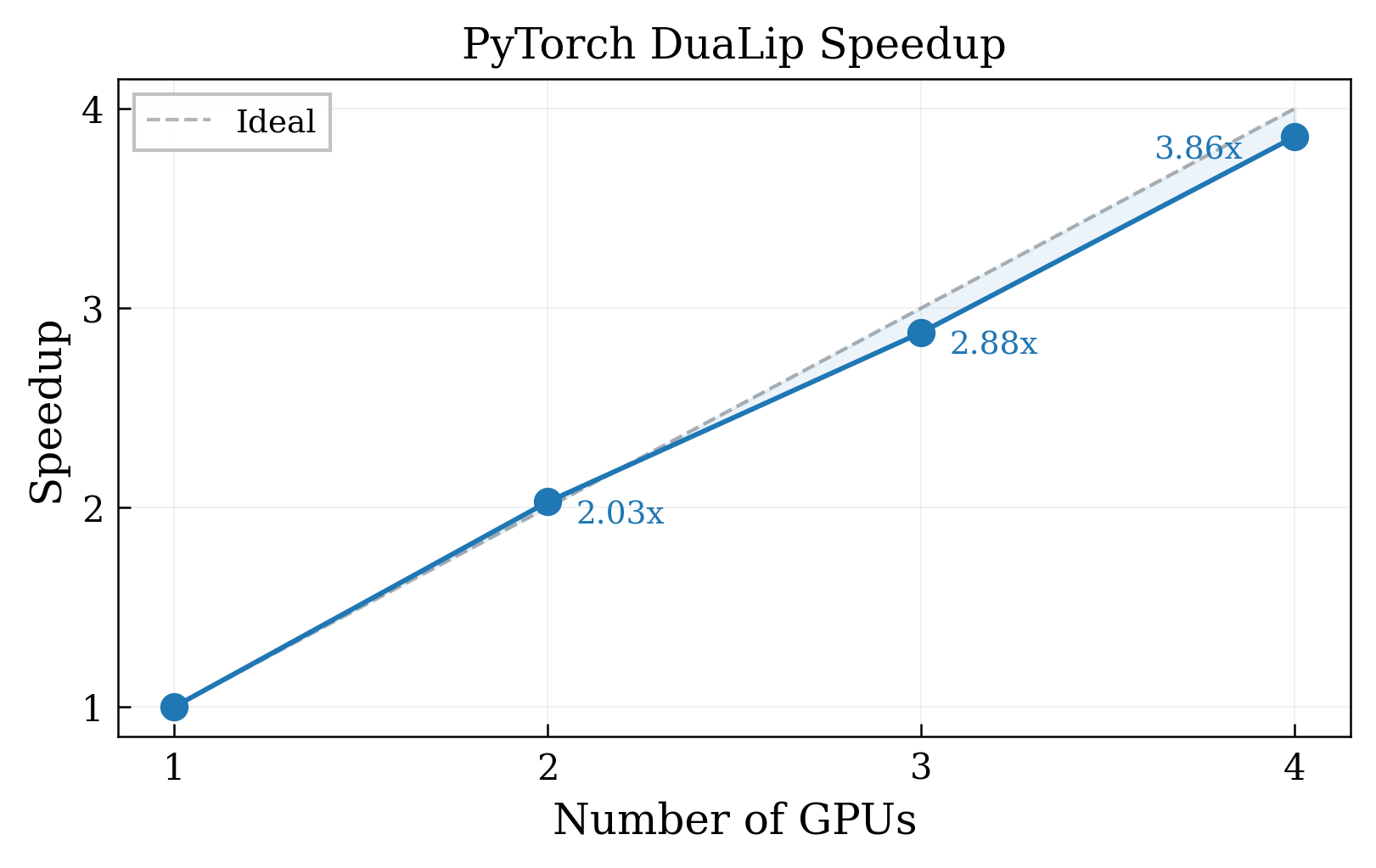}
    \caption{Scaling behavior across GPUs. (Left) Solve time versus number of GPUs for problem sizes between 25M and 100M sources. (Right) Speedup relative to a single GPU, approaching ideal linear scaling.}
    \label{fig:scaling}
\end{figure}

\paragraph{Effect of Algorithmic Enhancements}

We now isolate the impact of two algorithmic improvements introduced in Section~\ref{sxn:algorithm_enhancments}: diagonal preconditioning and regularization continuation.

\emph{Preconditioning.}
Figure~\ref{fig:preconditioning} plots $\log(|L - \hat{L}|)$, where $L$ is the dual objective and $\hat{L}$ is the converged reference value. With diagonal preconditioning, convergence accelerates substantially, particularly in early iterations. This confirms that scaling the dual variables mitigates ill-conditioning arising from heterogeneous constraint magnitudes.

\begin{figure}[ht!]
    \centering
    \includegraphics[width=10cm]{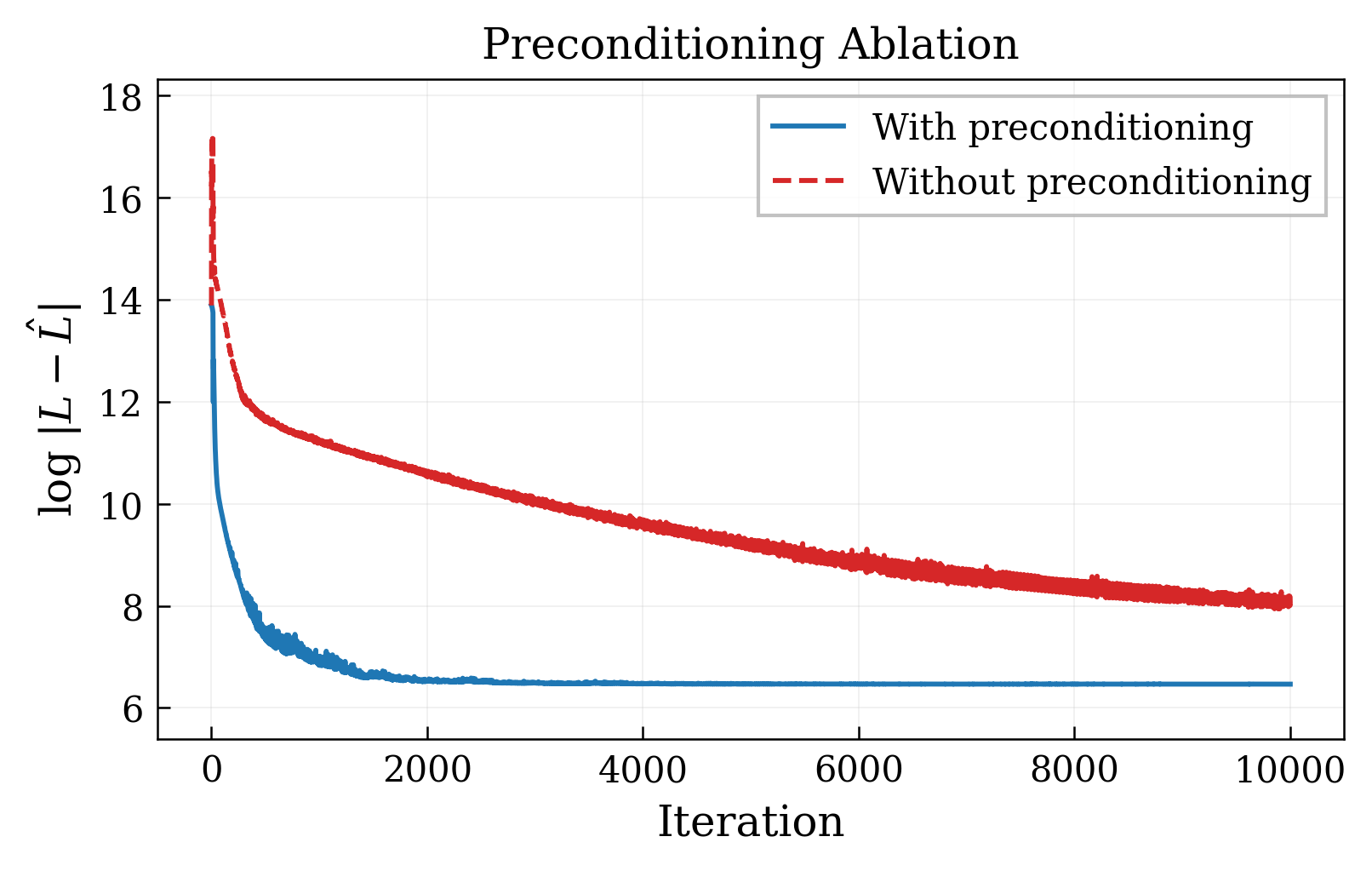}
    \caption{Effect of diagonal preconditioning. We report $\log(|L - \hat{L}|)$ for a 25M-source instance (10k destinations, 0.1\% sparsity). Preconditioning significantly improves early-stage convergence.}
    \label{fig:preconditioning}
\end{figure}

\emph{Regularization Continuation.}
Figure~\ref{fig:gamma_decay} evaluates the continuation strategy for the ridge parameter $\gamma$. Starting from a larger $\gamma$ stabilizes and accelerates early optimization, while gradual decay ensures the final solution closely approximates the unregularized LP optimum. Decaying $\gamma$ from 0.16 to 0.01 (halved every 25 iterations) yields faster convergence compared to using a fixed regularization level.

\begin{figure}[ht!]
    \centering
    \includegraphics[width=10cm]{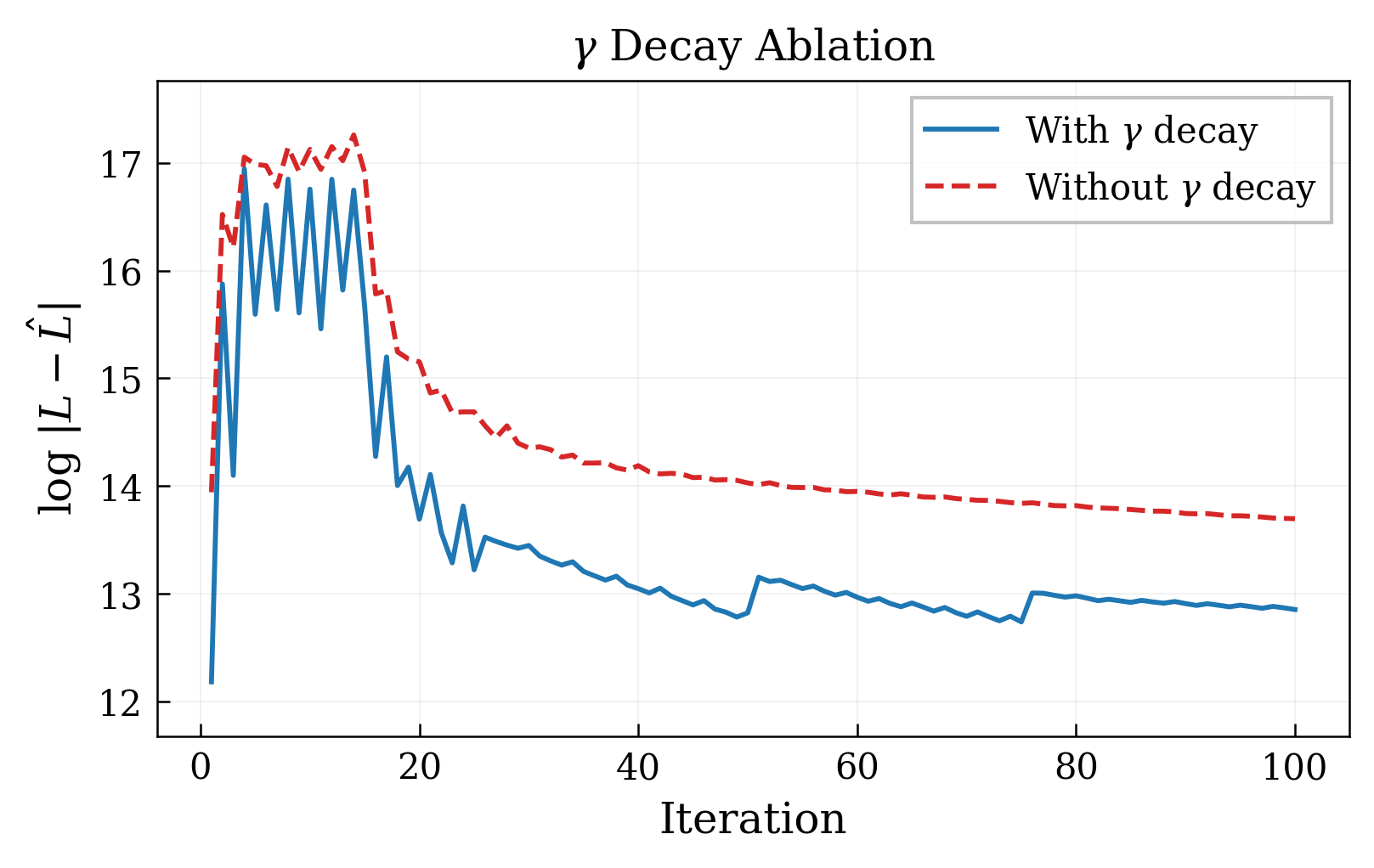}
    \caption{Effect of regularization continuation. Decaying $\gamma$ during optimization accelerates convergence while preserving solution fidelity.}
    \label{fig:gamma_decay}
\end{figure}

\paragraph{Reproducibility} Code for reproducing all synthetic benchmarking experiments is available at \url{https://github.com/linkedin/DuaLip}.

\section{Discussion and Limitations}

This report provided an overview of the principal architectural choices, algorithmic improvements, and GPU implementation strategies needed to address extreme-scale matching problems in an industrial setting. A natural next step is to further evaluate the core dual-ascent method on standard linear programming benchmarks beyond matching. In addition, because suitable real-world public datasets for extreme-scale matching LPs are scarce, the current evaluation relies on large synthetic instances. Establishing standardized benchmark suites and data sources would enable more comprehensive and reproducible empirical studies of extreme-scale matching LPs.

%% file: appendix.tex
\section{Additional theory}

\subsection{Proof of Lemma \ref{lemma:precon}}

\begin{proof}




Then
\[
\Ab\Ab^\top
=\begin{bmatrix}\Ab_1&\cdots&\Ab_I\end{bmatrix}
 \begin{bmatrix}\Ab_1^\top\\ \vdots\\ \Ab_I^\top\end{bmatrix}
=\sum_{i=1}^I \Ab_i\Ab_i^\top.
\]
Taking expectations and using i.i.d.\ blocks gives
$\EE[\Ab\Ab^\top]=\sum_{i=1}^I \EE[\Ab_i\Ab_i^\top]=I\,\Mb$
with $\Mb=\EE[\Ab_1\Ab_1^\top]$.

For the normalized matrix,
\[
\EE[\widetilde{\Ab}\widetilde{\Ab}^\top]
=\Db_{\mathrm{exp}}\,\EE[\Ab\Ab^\top]\,\Db_{\mathrm{exp}}
= I\,\Db_{\mathrm{exp}}\,\Mb\,\Db_{\mathrm{exp}}.
\]
By definition of $\Db_{\mathrm{exp}}$,
the $(r,r)$ entry equals
\(
I\,\EE\|\Ab_{r*}\|_2^2\,/\,\EE\|\Ab_{r*}\|_2^2=1
\),
so $\mathrm{diag}\!\big(\EE[\widetilde{\Ab}\widetilde{\Ab}^\top]\big)=\Ib$.

For $r\neq s$,
\[
\big|\EE[\widetilde{\Ab}\widetilde{\Ab}^\top]_{rs}\big|
=\frac{\big|\EE\langle \Ab_{r*},\Ab_{s*}\rangle\big|}
{\sqrt{\EE\|\Ab_{r*}\|_2^2\,\EE\|\Ab_{s*}\|_2^2}}
\;\le\;\eta.
\]
Thus each row of $\EE[\widetilde{\Ab}\widetilde{\Ab}^\top]$ has diagonal entry $1$
and the sum of absolute off-diagonals at most $(m-1)\eta$.
By Gershgorin’s disc theorem,
all eigenvalues lie in $[\,1-(m-1)\eta,\;1+(m-1)\eta\,]$, giving
\(
\kappa\le\frac{1+(m-1)\eta}{1-(m-1)\eta}.
\)
\end{proof}

\subsection{Primal feasibility convergence}

We extend the theoretical analysis of \cite{basu2020eclipse} and \cite{ramanath2021efficient}. In particular, we bound the $\ell_2$-norm primal infeasibility as a function of dual objective suboptimality, which allows one to apply standard first order convergence theory to obtain primal feasibility guarantees from the dual ascent method presented in Section \ref{sxn:prelim}. 

\vspace{0.5cm}
\begin{lemma}
    Let $\lambdab^*$ be the unique maximizer of the concave dual $g$ associated with the LP \ref{eqn:LP}. If $\lambdab \ge \zero$, then, primal infeasibility is bounded above as
    \begin{equation}
        \label{eq:dual-gap-to-residual}
        \bigl\|\left(\Ab\xb_\gamma^*(\lambdab)-\bb\right)_{+}\bigr\|_2
        \leq
        \sqrt{\,2L\bigl(g(\lambdab^*)-g(\lambdab)\bigr)}\,,
    \end{equation}
    where $L=\|\Ab\|_2^2/\gamma$ is the Lipschitz constant of $\nabla g$, and for a vector $\vb$, the operator $(\vb)_+$ denotes the componentwise maximum with zero, i.e.,
    \[
    (\vb_+)_i = \max(v_i, 0).
    \]
\end{lemma}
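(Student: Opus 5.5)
The plan is to use a single projected gradient ascent step and the standard sufficient-increase inequality for smooth concave functions. Write $\nabla g(\lambdab)=\Ab\xb_\gamma^*(\lambdab)-\bb$, which holds by Danskin's theorem as in Section~\ref{sxn:prelim}. Let $L=\|\Ab\|_2^2/\gamma$ be the Lipschitz constant of $\nabla g$ (Lemma 3 in \cite{basu2020eclipse}). Define the projected step
\[
\lambdab^+ \;=\; \bigl(\lambdab + \tfrac{1}{L}\nabla g(\lambdab)\bigr)_+ ,
\qquad \db := \lambdab^+-\lambdab .
\]
Since $\lambdab^+\ge\zero$ is dual feasible and $\lambdab^*$ maximizes $g$ over $\lambdab\ge\zero$, we have $g(\lambdab^+)\le g(\lambdab^*)$. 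It therefore suffices to show
\[
\|(\nabla g(\lambdab))_+\|_2^2 \;\le\; L^2\|\db\|_2^2 \;\le\; 2L\bigl(g(\lambdab^+)-g(\lambdab)\bigr).
\]

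\textbf{Step 1 (sufficient increase).} By $L$-smoothness (the descent lemma applied to the convex function $-g$),
\[
g(\lambdab^+)\ge g(\lambdab)+\langle\nabla g(\lambdab),\db\rangle-\tfrac{L}{2}\|\db\|_2^2 .
\]
The variational characterization of the projection onto the nonnegative orthant gives $\langle \lambdab+\tfrac1L\nabla g(\lambdab)-\lambdab^+,\ \yb-\lambdab^+\rangle\le 0$ for all $\yb\ge\zero$. Taking $\yb=\lambdab$, which is allowed because $\lambdab\ge\zero$, yields $\langle\nabla g(\lambdab),\db\rangle\ge L\|\db\|_2^2$. Combining the two inequalities gives $g(\lambdab^+)-g(\lambdab)\ge \tfrac{L}{2}\|\db\|_2^2$.

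\textbf{Step 2 (componentwise comparison).} Coordinatewise, $d_r=\max\bigl(\nabla_r g(\lambdab)/L,\,-\lambda_r\bigr)$. If $\nabla_r g(\lambdab)>0$, then $d_r=\nabla_r g(\lambdab)/L$, because $-\lambda_r\le 0$. If $\nabla_r g(\lambdab)\le0$, the positive part vanishes. Hence $\bigl((\nabla g(\lambdab))_+\bigr)_r\le L|d_r|$ for every $r$, and so $\|(\Ab\xb_\gamma^*(\lambdab)-\bb)_+\|_2\le L\|\db\|_2$. Chaining this with Step 1 and with $g(\lambdab^+)\le g(\lambdab^*)$ gives \eqref{eq:dual-gap-to-residual} after taking square roots.

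The only delicate point is Step 1. A naive unconstrained argument would bound the full gradient norm $\|\nabla g\|$, which is false at a constrained optimum: coordinates where $\lambda_r=0$ can carry a strictly negative gradient. The projection inequality, evaluated with $\yb=\lambdab$, is exactly what handles this, and it is the reason only the positive part appears in the bound. Uniqueness of $\lambdab^*$ is not actually needed; only the optimal value $g(\lambdab^*)=\max_{\lambdab\ge\zero}g$ enters the argument.
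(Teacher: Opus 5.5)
Your proof is correct and follows essentially the same route as the paper's. Both take one projected gradient step with step size $1/L$ and test the projection inequality at $\lambdab$ to get the sufficient increase $g(\lambdab_+)-g(\lambdab)\ge \frac{L}{2}\|\db\|_2^2$. Both then use the coordinatewise identity $L\db=\max(\nabla g(\lambdab),-L\lambdab)$ to dominate the positive part of the residual before comparing with $g(\lambdab^*)$. Your remark that uniqueness of $\lambdab^*$ is not needed is also correct: only the fact that it maximizes $g$ over $\lambdab\ge\zero$ is used.
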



\begin{proof}
Let $g$ be concave with $L$-Lipschitz gradient and define
\[
\lambdab_+ \;=\; \Pi_{\mathbb{R}_+^m}\!\left(\lambdab + \frac{1}{L}\nabla \gb(\lambdab)\right),
\qquad
\db := \lambdab_+ - \lambdab,
\qquad
\nabla_+ \gb(\lambdab):=\max(\nabla \gb(\lambdab), \zero).
\]
Note that $\nabla_+ \gb(\lambdab) = (\Ab \xb^\star_{\gamma}(\lambdab) - \bb)_{+}$. By $L$-smoothness of $g$ and concavity,
\begin{equation}
\label{eq:ascent-lemma}
g(\lambdab_+) - g(\lambdab)
\;\ge\;
\langle \nabla \gb(\lambdab), \db \rangle
-
\frac{L}{2}\|\db\|_2^2.
\end{equation}

Let $\yb := \lambdab + \frac{1}{L}\nabla \gb(\lambdab)$. Since $\lambdab_+=\Pi_{\mathbb{R}_+^m}(\yb)$,
the projection optimality condition gives
\[
\langle \yb-\lambdab_+, \ub-\lambdab_+\rangle \le 0
\qquad \forall \ub \in \mathbb{R}_+^m.
\]
Choosing $\ub=\lambdab$ (valid since $\lambdab\ge 0$) yields
\[
\left\langle \lambdab + \frac{1}{L}\nabla \gb(\lambdab) - \lambdab_+,\, \lambdab - \lambdab_+ \right\rangle \le 0.
\]
Substituting $\db=\lambdab_+-\lambdab$ and simplifying,
\[
\left\langle \frac{1}{L}\nabla \gb(\lambdab) - \db,\; -\db \right\rangle \le 0
\;\;\Longrightarrow\;\;
\left\langle \nabla \gb(\lambdab),\db\right\rangle \ge L\|\db\|_2^2.
\]

Plugging this into~\eqref{eq:ascent-lemma} gives
\[
g(\lambdab_+) - g(\lambdab)
\;\ge\;
L\|\db\|_2^2 - \frac{L}{2}\|\db\|_2^2
=
\frac{L}{2}\|\db\|_2^2.
\]

Next, note that
\[
L\db = L(\lambdab_+-\lambdab)
= L\left(\Pi_{\mathbb{R}_+^m}\!\left(\lambdab + \frac{1}{L}\nabla \gb(\lambdab)\right)-\lambdab\right)
= \max\!\bigl(\nabla \gb(\lambdab),\, -L\lambdab\bigr),
\]
so $\|L\db\|_2 \ge \|\nabla_+ \gb(\lambdab)\|_2$.
Therefore
\[
g(\lambdab_+) - g(\lambdab)
\;\ge\;
\frac{L}{2}\|\db\|_2^2
=
\frac{1}{2L}\|L\db\|_2^2
\;\ge\;
\frac{1}{2L}\|\nabla_+ \gb(\lambdab)\|_2^2.
\]
Finally, 
\[
g(\lambdab^\star) - g(\lambdab) \;\ge\; g(\lambdab_+) - g(\lambdab)
\;\ge\;
\frac{1}{2L}\|L\db\|_2^2
\;\ge\; \frac{1}{2L} \|(\Ab \xb^\star(\lambdab) - \bb)_+\|^2_2.
\]
\end{proof}

\section{Experiment setting}\label{app:experiment_setting}

\paragraph{Synthetic LP construction.}
We construct synthetic instances by first generating a sparse bipartite interaction graph and then assigning values and constraint coefficients on its edges. Given a target number of ``requests'' \(I\), ``resources'' \(J\), and a target sparsity level, we draw a lognormal ``breadth'' parameter for each resource \(j\), normalize these to obtain probabilities \(p_j\), and sample the number of incident requests \(K_j \sim \mathrm{Poisson}(p_j I \nu)\), truncated at \(I\), where $\nu$ is the desired average number of nonzeros per row. For each resource \(j\), we then select \(K_j\) distinct requests and create edges \((i,j)\). On each edge, we draw a resource-specific value scale \(v_j\), a request-specific responsiveness factor \(u_i\), and multiplicative noise \(\varepsilon_{ij}\), and define a nonnegative value coefficient
\[
c_{ij} = \min\bigl(v_j u_i \varepsilon_{ij},\, c_{\max}\bigr).
\]
Constraint coefficients are taken to be scaled versions of these values,
\[
a_{ij} = s_j c_{ij},
\]
where the per-resource scale \(s_j\) is also drawn from a lognormal distribution. This construction yields a sparse matrix \(\Ab\) whose rows differ both in support size and magnitude (often by several orders), and a matching value matrix \(\Cb\) with the same sparsity pattern. 

\paragraph{Source capacities and right-hand side.}
Right-hand side source capacities \(b_j\) are chosen to make a nontrivial fraction of constraints active. Instead of taking \(b_j\) proportional to the sum \(\sum_i a_{ij}\), we approximate the maximum feasible load each resource could receive under the per-request simplex constraint (each request can allocate to at most one resource) by a greedy assignment: for each request \(i\), we identify the incident edge with the largest \(a_{ij}\) and assign that amount to the corresponding resource. Summing these contributions over requests gives a ``greedy load'' \(\ell_j\) for each resource, and we set 
\[
b_j = \rho_j \bigl(\ell_j + \varepsilon\bigr),
\]
where \(\rho_j\) is drawn uniformly from \([0.5, 1.0]\) and \(\varepsilon > 0\) is a small constant. This ensures that some resource constraints are binding while others remain slack in the optimal solution. The final LP data passed to the solver is the sparse CSC representation of $\Ab$, the corresponding value matrix (with signs adjusted to match our minimization convention), and the source capacity vector $\bb$.




\paragraph{Optimization algorithm.}
We optimize the regularized dual objective using a variant of Nesterov accelerated gradient ascent, following the implementation in \cite{dualipcode} (\texttt{AcceleratedGradientDescent.scala}), translated directly into PyTorch.

The method maintains a running estimate of the local Lipschitz constant based on recent gradient information, which is used to adaptively select the step size at each iteration. This allows the algorithm to respond to changes in local curvature while preserving the accelerated structure.

To ensure numerical stability, the step size is capped by a maximum allowable value. If this cap is set too aggressively, the method may diverge due to curvature underestimation; if set too conservatively, convergence slows substantially. In practice, this cap plays a critical role in balancing robustness and convergence speed.

All experiments use fixed hyper-parameters:
\texttt{max-step-size = 1e-3} and 
\texttt{initial-step-size = 1e-5}, 
and ridge parameter $\gamma = 0.01$ unless otherwise noted.
These values were held constant across problem sizes to ensure comparability.
Optimization is terminated after a fixed number of iterations. 



